%% file: bare_jrnl_new_sample4.tex
\documentclass[lettersize,journal]{IEEEtran}
\usepackage{amsmath,amsfonts}
\usepackage{algorithm}
\usepackage{array}
\usepackage[caption=false,font=normalsize,labelfont=sf,textfont=sf]{subfig}
\usepackage{textcomp}
\usepackage{stfloats}
\usepackage{url}
\usepackage{verbatim}
\usepackage{graphicx}
\usepackage{cite}
\usepackage{comment}
\usepackage{algpseudocode}
\algnewcommand{\LineIf}[2]{\State \algorithmicif\ #1 \algorithmicthen\ #2}
\usepackage{amsmath}
\usepackage{amssymb} 
\usepackage{amsthm}
\theoremstyle{definition} 
\newtheorem{definition}{Definition}[section] 
\newtheorem{lemma}{Lemma}[section] 
\newtheorem{theorem}{Theorem}[section] 
\usepackage{fancyhdr}

\begin{document}

\title{Catching the Fly: Practical Challenges \\in Making Blockchain FlyClient Real}

\author{Pericle Perazzo
\and
Dario Capecchi
\thanks{Pericle Perazzo and Dario Capecchi are with the Department of Information Engineering, University of Pisa, 56122 Pisa, Italy (e-mails: pericle.perazzo@unipi.it, dario.capecchi@ing.unipi.it).}%
}



\maketitle

\begin{center}
\textit{Submitted to IEEE Transactions on Dependable and Secure Computing}
\end{center}

\begin{abstract}
\input{sections/00_abstract}
\end{abstract}

\begin{IEEEkeywords}
Blockchain, FlyClient, lightweight verification, Zcash, cross-chain bridges.
\end{IEEEkeywords}

\input{sections/01_introduction}
\input{sections/02_related_work}
\input{sections/03_background}
\input{sections/04_adversary_model}
\input{sections/05_prover}
\input{sections/06_optimizations}
\input{sections/07_conclusions}

\bibliographystyle{IEEEtran}
\bibliography{bibliography}

\end{document}

%% file: sections/00_abstract.tex
FlyClient is a lightweight blockchain verification protocol that enables proof-of-work validation using minimal data, making it ideal for resource-constrained environments like mobile wallets, Internet-of-Things devices or cross-chain bridges implemented with smart contracts. 
Despite its strong potential for enabling lightweight blockchain verification, FlyClient protocol is still in the experimental stages, with limited real-world deployments and performance evaluations under diverse conditions. 
In this paper we bridge the gap between theory and deployment, by addressing several technical challenges to advance FlyClient to a production-ready solution.
Namely, our contribution is three-fold: (i) we formally introduce an adversary model alternative to the original FlyClient one, that allows us to parametrize a verifier under a concrete economic interpretation, while also saving some proof space; 
(ii) we provide the first practical FlyClient prover implementation for a production blockchain (Zcash), and we estimate its performance under different configurations;
(iii) we introduce and evaluate two optimizations that minimize the size of FlyClient proofs, the first of which does not require any consensus change.

%% file: sections/01_introduction.tex
\section{Introduction}
\label{sec:introduction}

\IEEEPARstart{F}{lyClient} \cite{bunz2020flyclient,nemoz2021deployment,perazzo2024smartfly} is an innovative protocol designed to enable ultra-light clients for Proof-of-Work (PoW) blockchains. 
Traditional blockchain clients require downloading and verifying large amounts of data to ensure security and correctness, which is impractical for resource-constrained execution environments. 
FlyClient addresses this challenge 
by leveraging probabilistic sampling of block headers combined with Merkle Mountain Range (MMR) commitments, thus producing proofs of chain validity that grow only logarithmically with the chain length.
The FlyClient protocol allows clients to verify the entire blockchain with minimal bandwidth and storage requirements, while maintaining strong security guarantees. 
It can produce both interactive or non-interactive proofs, and it can also deal with PoW difficulty variability. 
FlyClient finds application in those cases in which the state of the blockchain must be verified by execution environments with limited resources. 
A straightforward application is mobile wallets, which must make sure the presence of given transactions inside the blockchain. 
Current mobile wallets trust third-party full nodes to maintain and check the whole blockchain, but FlyClient can do the same without any trust assumption. 
Another application is in the Internet of Things (IoT), in which resource-constrained devices can use FlyClient to perform machine-to-machine transactions or read and write immutable logs. 
A third application is in blockchain bridges, which are aimed at moving a crypto-asset from a blockchain to another blockchain and back. 
Current blockchain bridges employ trusted third-party nodes or federations of nodes, but FlyClient allows a smart contract to verify the state of a foreign blockchain by validating a non-interactive FlyClient proof. 
This permits trustless bridges, in which the correctness of a cross-chain interaction is automatically proved without trusting any (particular) node. 
A blockchain can support the FlyClient functionalities natively \cite{nemoz2021deployment} or by means of smart contracts \cite{perazzo2024smartfly}.
Among the cryptocurrencies that natively support FlyClient, Zcash \cite{zcash} is the most important and capitalized one. 
Zcash integrated FlyClient through ZIP 221 \cite{zip221} as part of the Heartwood network upgrade.
Despite this, no FlyClient prover for Zcash has been developed yet, and the FlyClient protocol is still in the experimental stages. 

This paper addresses several technical challenges to advance FlyClient from a theoretical concept to a production-ready solution. 
In particular, its contribution is three-fold. 
First, we introduce the \emph{$w_a$-adversary model} as an alternative to the original FlyClient's $(c,L)$-adversary model. 
The $w_a$-adversary is based on a single parameter that directly captures the budget available to the attacker for performing his attack, and thus it has a concrete economic interpretation useful for real deployments. 
A FlyClient verifier can be parametrized by assuming a certain $w_a$-adversary and solving an optimization problem, which also offers some proof size reduction that we measure. 
Second, we present our experiences in extending Zebrad \cite{zebrad}, which is the Rust-based official full node implementation of Zcash, for offering FlyClient prover functionalities. 
We also estimate the extra resources that are needed for such functionalities, and we measure the proof size under different representations and different verifier variants.
The prover implementation is made available to the Zebrad community, which, at the time of writing, has already accepted some of our extensions in the Zebrad official repository.
Third, we introduce and evaluate two optimizations that aim at minimizing the size of the interactive and non-interactive FlyClient proofs.
Namely, we introduce the \emph{cumulative proofs} and the \emph{distilled proofs}. 
The former optimization offers limited advantage (-9.16\% non-interactive proof size), but it is already deployable at zero cost. 
The latter one requires a change in the consensus layer, but it offers a more substantial advantage (-71.33\% non-interactive proof size).

The rest of the paper is structured as follows. 
Section \ref{sec:rw} discusses related work. 
Section \ref{sec:background} explains the necessary background concepts. 
Section \ref{sec:adversary_model} introduces the $w_a$-adversary model and the relative configuration method for the verifier. 
Section \ref{sec:prover} introduces and evaluates our FlyClient prover implementation for Zcash. 
Section \ref{sec:optimizations} introduces and evaluates our proposed FlyClient proof optimizations. 
Finally, the paper is concluded in Section \ref{sec:conclusions}.

%% file: sections/02_related_work.tex
\section{Related Work}
\label{sec:rw}

\subsection{Research on FlyClient}
B\"unz et al. \cite{bunz2020flyclient} introduced the FlyClient protocol to enable extremely lightweight verification of PoW blockchains by reducing the data requirements for clients from linear to logarithmic complexity. 
FlyClient achieves efficiency through probabilistic sampling and MMR commitments embedded in block headers, allowing clients to verify the entire chain with only a logarithmic number of sampled blocks, even in the case of variable-difficulty blockchains. 
The authors model the adversary as a $(c,L)$-adversary, that is an attacker that cannot produce an $L$-long fork with $c$ validity ratio. 
Although this modeling is convenient to mathematically prove the security properties of FlyClient, it lacks a direct economic interpretation, thus it is less convenient from the point of view of a real deployment.
The present paper 
introduces the alternative $w_a$-adversary model, which is based on a single parameter that directly captures the budget available to the attacker for performing his attack, and thus it has a concrete economic interpretation. 

Nemoz and Zamyatin \cite{nemoz2021deployment} explore the security implications of deploying FlyClient using a velvet fork, which allows backward compatibility without requiring all nodes to upgrade. 
In particular, the authors explored the possibility of chain-sewing attacks (originally introduced by \cite{kiayias2021velvet}), in which an adversary stitches together parts of different chains to deceive FlyClient verifiers about the state of the blockchain. 
This attack is feasible because, in a velvet fork, a block could contain invalid FlyClient data without being rejected by the nodes. 
However, this does not apply to Zcash, which implemented FlyClient via a hard fork. 
In the Zcash case, all blocks are required to include valid FlyClient data, hence preventing the conditions necessary for the chain-sewing attack to succeed.
Our FlyClient prover implementation is therefore immune to chain-sewing attacks.

Perazzo and Xefraj \cite{perazzo2024smartfly} proposed SmartFly, a FlyClient implementation via smart contracts which does not require forks at all. 
The nodes that invoke the smart contract methods do not need to be trusted, as long as the smart contract language can access the hash of the most recent block. 
SmartFly does not need changes in the blockchain format or forks of any kind, but it requires some monetary cost to be maintained, due to the invocations of smart contract methods. 
Our FlyClient prover is based on Zcash, which implemented FlyClient via a hard fork, so it does not require any monetary cost to be maintained.

\subsection{FlyClient Alternatives}
Kiayias et al. \cite{kiayias2016proofs} and Kiayias et al. \cite{kiayias2017noninteractive} proposed PoPoW and NIPoPoW, which first enabled sublinear (logarithmic) verification for light clients of PoW blockchains. 
These techniques are based on superblocks, that are blocks with a PoW difficulty higher than the difficulty target. 
Unfortunately, PoPoW and NIPoPoW assume a constant difficulty, and they do not apply well in real blockchains where difficulty is variable and automatically adjusted from the mining timestamps. 
FlyClient is able to produce logarithmic-size proofs as well, but at the same time it can also deal with PoW difficulty variations.

Bonneau et al. \cite{bonneau2020coda} proposed Coda, which uses recursive SNARK proofs to compress the whole blockchain into a single constant-size proof, so that its validity can be verified in constant time. 
Despite being promising from the point of view of the verifier's performance, Coda requires a high computational cost for block producers, and a complete re-design of the blockchain protocol, hence it is not easily applicable to already existing cryptocurrencies.
Vesely et al. \cite{vesely2022plumo} use (non-recursive) SNARK proofs to implement ultralight clients. 
With respect to Coda, Plumo is applicable to already existing blockchains. 
Moreover, it is consensus-agnostic, in the sense that it can be applied independently of the kind of employed consensus: PoW or PoS or BFT. 
Plumo enjoys thus a larger applicability with respect to FlyClient, which applies only to PoW blockchains. 
However, Plumo incurs significantly higher prover-side computational costs due to proof generation, compared to FlyClient which requires only lightweight operations such as MMR maintenance and header serving. 
Interestingly, SNARKs and FlyClient are not mutually exclusive solutions, in the sense that a non-interactive FlyClient proof could in principle be ``compressed'' with SNARK techniques. 
We plan to explore such a possibility in future work.

%% file: sections/03_background.tex
\section{Background}
\label{sec:background}

\subsection{MMR and Proofs of Ancestry} 
The FlyClient protocol \cite{bunz2020flyclient} is based on the Merkle Mountain Range (MMR) data structure \cite{mmr}, which is an extension of a Merkle tree that allows for efficient (logarithmic) appends. 
Fig. \ref{fig:mmr} exemplifies an MMR constructed over a blockchain. 
\begin{figure}
    \centering
    \includegraphics[trim={1,12cm 7.25cm 19.19cm 2.36cm},clip,width=0.9\linewidth]{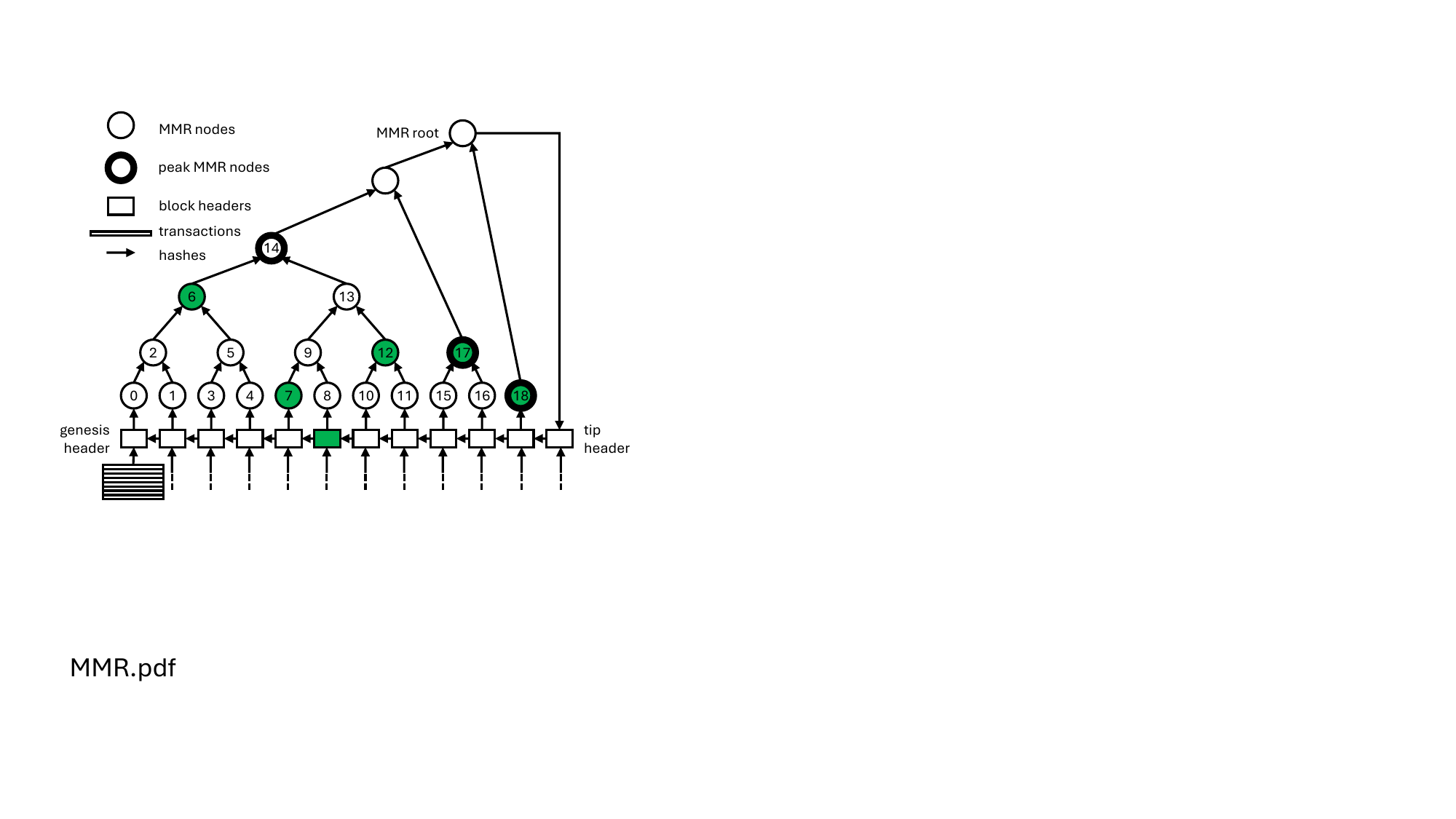}
    \caption{Example of MMR constructed over a blockchain.}
    \label{fig:mmr}
\end{figure}
The blockchain is an ordered sequence of \emph{blocks}, each of which is composed by a small \emph{block header} (or simply \emph{header}) and a potentially big number of ordered transactions. 
Each header contains a commitment of all the block's transactions, which in the figure is simplified by a single hash, but in real implementations is a Merkle tree root to allow for efficient inclusion proofs. 
Each header contains the hash of the previous header (\emph{previous hash}), thus composing the so-called \emph{header chain}. 
The first header is called \emph{genesis header}. 
It is well-known by all the protocol participants and it contains a null previous hash. 
The (currently) last header is called \emph{tip header}. 
The MMR is built starting from the leaf nodes, each of which is relative to a header and contains its hash. 
Each intermediate node of the MMR contains the hash of both its children. 
In general, the MMR is not balanced, but it can be seen as a sequence of balanced Merkle trees (the mountains), whose root nodes are called \emph{peak MMR nodes}, or simply \emph{peaks}. 
The peaks and their descendants are persistent, in the sense that they make part of the MMR forever, independently on how many appends are done. 
They have a unique numerical identifier, which reflects the order of their creation. 
The MMR nodes over the peaks are called \emph{generated MMR nodes}, they are not persistent, and they do not have any identifier. 
The purpose of the generated MMR nodes is to reduce the peaks into a single root node, whose hash is then stored in the header successive to the last one covered by the MMR. 
Each MMR node contains all the pieces of information needed to check the plausibility of the difficulty transitions between the blocks covered by the node itself. 
Typically, it contains the timestamps and the difficulties of the first and the last covered block, and the cumulative work of all the covered blocks.

It is possible to build a logarithmic-size \emph{proof of ancestry} that proves that a given MMR is an ancestor of another MMR, that is the latter has been obtained from the former with a given number of appends. 
Each MMR is represented by the header that contains its root hash. 
The proof of ancestry of header $i$ with respect to header $j>i$ is constituted by the minimal set of nodes of the MMR of header $j$ that cover exactly all the headers except header $i$. 
For example, the proof of ancestry of header $i=6$ with respect to the tip header $j=11$ in Fig. \ref{fig:mmr} is constituted by the green MMR nodes.

\subsection{FlyClient Protocol}
FlyClient \cite{bunz2020flyclient} is a super-light client protocol designed for Proof-of-Work (PoW) blockchains. 
Its goal is to allow resource-constrained devices like mobile wallets or IoT devices to verify the validity of an entire blockchain without downloading all block headers or trusting any third-party full node. 
By using the FlyClient protocol, a prover can convince a verifier to own a header chain of a given length, and that these headers are valid, meaning that they all contain a valid proof of work and the difficulty transitions between them are correct. 
To do this without downloading all the headers, the verifier first downloads the tip header from the prover, then it \emph{samples} a logarithmic number of headers. 
``Sampling a header'' means (i) downloading it, (ii) checking its proof of work to be valid, (iii) downloading and checking its ancestry proof with respect to the tip header, (iv) checking the plausibility of the difficulty transitions via the MMR nodes of the ancestry proof. 
The verifier deterministically samples the last $n_\mathit{det}$ headers precedent to the tip header, and $n_\mathit{prob}$ headers probabilistically chosen among the precedent ones, using a specific probability function biased towards the rightmost headers.

To parametrize $n_\mathit{det}$ and $n_\mathit{prob}$ we have to assume a $(c,L)$-adversary, that is a malicious prover that cannot produce a fork longer than $L$ with $c$ validity ratio, except with probability negligible with the security parameter $\lambda$.
If a verifier wants to guarantee that such an adversary has less than $2^{-\lambda}$ probability to succeed in his attack, then it can parametrize the FlyClient protocol with:
\begin{eqnarray}
\label{eqn:interactive_n_det}
n_\mathit{det} &=& L \\
\label{eqn:interactive_n_prob}
n_\mathit{prob} &=& \frac{\lambda}{\log_{0.5} \left(1-\frac{1}{\log_c \delta}\right)},
\end{eqnarray}
where $n$ is the chain length, and $\delta$ is the fraction of work sampled deterministically over the total chain work. 
In general, $\delta = \frac{\sum_{i = n - L + 1}^n d_i}{\sum_{i = 1}^n d_i}$ where $d_i$ is the difficulty of the $i$-th block of the chain, while in case of fixed difficulty it simplifies in $\delta = L / n$.
We call \emph{FlyClient proof} the set of headers and MMR nodes downloaded by the verifier during the FlyClient protocol. 
The FlyClient proof convinces the verifier that the prover owns a valid header chain of a given length.

The FlyClient protocol has a non-interactive version, which allows the prover to be offline while the verifier checks the FlyClient proof. 
In this way the FlyClient proof can be conveniently stored on a file, a database, or a smart contract. 
To produce a non-interactive FlyClient proof, a prover must apply the Fiat-Shamir heuristic\cite{fiat1986how} to the tip header, and increase the protocol parameters in order to compensate the relative decrease in protocol security.
With the same $(c,L)$-adversary and the same $2^{-\lambda}$ target probability of adversarial success, the verifier must parametrize the non-interactive FlyClient proof with:
\begin{eqnarray}
\label{eqn:noninteractive_n_det}
n_\mathit{det} &=& L \\
\label{eqn:noninteractive_n_prob}
n_\mathit{prob} &=& \frac{\lambda - \log_{0.5} (c \cdot n)}{\log_{0.5} \left(1-\frac{1}{\log_c \delta}\right)}.
\end{eqnarray}

\subsection{Zcash Implementation of FlyClient} 
Zcash \cite{zcash} implemented FlyClient through ZIP 221 \cite{zip221} as part of the Heartwood network upgrade, with a hard fork happened on July 2020. 
Since then, as of April 2026 Zcash underwent four more consensus changes called \emph{network upgrades}, each implemented with a hard fork. 
Namely the network upgrades have been Canopy (November 2020), NU5 (2022), NU6 (2024), and the minor upgrade NU6.1 (November 2025). 
Some of them changed the format and/or the meaning of the fields in the headers and the MMR nodes. 
The part of the blockchain that uses the same consensus rules between two network upgrades is called \emph{consensus branch}. 
Notably, the MMR does not grow indefinitely in Zcash, but rather its construction reboots from scratch at each network upgrade. 
In practice each consensus branch has its own MMR independent to the precedent ones, and the first header of each consensus branch constitutes the connection point between the old and the new MMR.
This multi-MMR approach deviates from the original FlyClient mechanism \cite{bunz2020flyclient} and it guarantees the cryptographic independency of different consensus branches.

In this section we refer to the Zcash protocol version of NU6, as the newer NU6.1 did not change anything in the FlyClient-related mechanisms.
Each node of the MMR maintained by the Zcash protocol has the following fields:
\begin{itemize}
\item \texttt{hashSubtreeCommitment}: the hash of the corresponding block header in case the node is a leaf, or the hash of the right and left child nodes serialized and concatenated together in case the node is not a leaf. It is represented on 32 bytes.
\item \texttt{nEarliestTimestamp}, \texttt{nLatestTimestamp}: the mining timestamps respectively of the first and the last block covered by the node, which are the same block in case the node is a leaf. They are represented each on 4 bytes.
\item \texttt{nEarliestTargetBits}, \texttt{nLatestTargetBits}: the difficulty targets respectively of the first and the last block covered by the node. They are represented each on 4 bytes.
\item \texttt{nEarliestHeight}, \texttt{nLatestHeight}: the heights respectively of the first and the last block covered by the node. They are represented each with a variable-length format, whose size varies from 1 (for small numbers) to 9 bytes (for big numbers).
\item \texttt{nSubTreeTotalWork}: the cumulative difficulty of the blocks covered by the node. It is represented on 32 bytes.
\end{itemize}
There are some additional fields that are not used directly by the FlyClient protocol, but rather they serve different security or efficiency purposes that are foreseeable for super-light clients. 
The total size of a Zcash MMR node varies between 212 and 244 bytes.

Each block header contains the following fields:
\begin{itemize}
  \item \texttt{hashBlockCommitments}: a cryptographic commitment of \texttt{hashAuthDataRoot} and \texttt{hashChainHistoryRoot}, which in turn are a commitment to all transaction authorization data in the block, and a commitment to the root of the MMR built from the genesis to the block previous to this. It is represented on 32 bytes.
  \item \texttt{nTime}: the block timestamp encoded as a Unix epoch time. It is represented on 4 bytes.
  \item \texttt{nBits}: the target threshold that the block header hash must satisfy for proof of work. It is represented on 4 bytes.
  \item \texttt{nNonce}: an arbitrary field that miners vary in order to search for a header hash meeting the target threshold. It is represented on 32 bytes.
  \item \texttt{solutionSize}: an integer on 3 bytes, whose value is always 1344.
  \item \texttt{solution}: the Equihash solution included in the header and required to be valid under the consensus rules. It is represented on 1344 bytes.
\end{itemize}

%% file: sections/04_adversary_model.tex
\section{Adversary Model and FlyClient Parametrization}
\label{sec:adversary_model}

The original FlyClient paper \cite{bunz2020flyclient} models the adversary as a $(c,L)$-adversary, that is an attacker that cannot produce a fork longer than $L$ with $c$ validity ratio, except with negligible probability. 
This modeling style is an extension of the classic adversary model in blockchain research, which is based on the share of resources (hash power or stake) controlled by the adversary. 
However, it is not convenient from the point of view of concrete deployment, because it is unclear how a realistic attacker can fit the two parameters $L$ and $c$, and this eventually leads to an arbitrary parametrization of the FlyClient verifier. 
In this paper we introduce the $w_a$-adversary model, which is easier to understand because it is based on a single parameter that directly captures the expected budget available to the attacker for performing his attack, and thus it has a straightforward and easy interpretation. 
We show that the $w_a$-adversary model enables direct mapping from expected attacker budget to verifier parameters, achieving the same security of ($c,L$)-based configurations while simplifying deployment and also reducing the proof size as side effect.

\subsection{$w_a$-Adversary}
\label{sec:wa_adversary}
Assuming a $w_a$-adversary means assuming that the attacker cannot produce blocks with more than $w_a$ cumulative work with non-negligible probability, while respecting the difficulty transition rules. 
We call $w_a$ the \emph{adversarial work budget}. 
More formally, we state the following. \begin{definition}[$w_a$-adversary]
An adversary $\mathcal{A}$ is a $w_a$-adversary if, in any attack execution against a FlyClient verifier, the valid malicious blocks that $\mathcal{A}$ can generate while respecting the consensus rules have at most $w_a$ cumulative work, except with probability negligible with a security parameter.
\end{definition}
The $w_a$-adversary has a direct monetary interpretation, because we can compute the expected cost of producing a valid block in terms of machinery and energy consumption. 
In general, given the expected cost $C_{51\%/h}$ of mounting a 51\% attack (that is the cost of producing the same hashrate of the honest network) for 1 hour, and given the expected number of blocks $n_\mathit{honest/h}$ produced by the honest network in 1 hour, the current average block difficulty $\tilde{d}$, and the adversarial work budget $w_a$, we can compute the expected adversarial budget as $\mathit{Budget}_a = C_{51\%/h} \cdot \frac{w_a}{\tilde{d} \cdot n_\mathit{honest/h}}$.
For example, at the time of writing performing a 51\% attack against Zcash for one hour is estimated to cost about 20,000 USD\footnote{\url{https://www.crypto51.app/} (accessed on 5 April 2026).}. 
Considering that the Zcash network produces 48 blocks each hour in average, and assuming an adversarial work budget of 45,000 GH\footnote{In this paper we conveniently measure the work in GH (GigaHash), that is the expected number of hashes that have been computed by the miner to produce the block(s).} and a current average block difficulty of 900 GH, the expected adversarial budget to produce them is thus 20,833 USD.
Of course a similar estimation could be done for other PoW-based blockchains as well.
Note however that this estimate could be optimistic, 
for two reasons. 
First of all because the computed budget is only an expected one, not a real one. 
Indeed, the attacker could be lucky in finding valid blocks, and thus he could produce more than 50 blocks with the same budget. 
Secondly, producing older valid blocks could be cheaper than producing newer ones, due to the tendential rise of the PoW difficulty. 
To mitigate this second source of optimism, one should consider the cost of producing an average block of the blockchain, from the genesis to the tip.

\subsection{FlyClient Parametrization}
We then show how to parameterize a FlyClient verifier based on such an adversary model. 
According to the definition, a $w_a$-adversary cannot create a fork of valid blocks with more than $w_a$ work in them. 
Therefore, such an adversary fulfills a traditional $(c,L)$-adversary assumption with $c=1$ and $L = \min \left\lbrace \nu \middle| \sum_{i = n - \nu + 1}^{n} d_i \ge w_a \right\rbrace$. 
In addition, he cannot create a fork of blocks with more than $2 w_a$ work in them, half of which is valid, so he fulfills also another $(c,L)$-adversary assumption this time with $c=0.5$ and $L = \min \left\lbrace \nu \middle| \sum_{i = n - \nu + 1}^{n} d_i \ge 2 w_a \right\rbrace$. 
In general, a $w_a$-adversary assumption automatically fulfills a \emph{family} of $(c,L)$-adversary assumptions having the property:
\begin{equation}
\label{eq:c_L_family}
L \ge \min \left\lbrace \nu \middle| \sum_{i = n - \nu + 1}^{n} d_i \ge w_a / c \right\rbrace.
\end{equation}
More formally, we state the following:
\begin{lemma}[Reduction to a family of $(c,L)$-adversaries]
\label{thm:c_L_family}
Let $\mathcal{A}$ be a $w_a$-adversary. 
Then, for every possible $c \in [0,1]$ and $L>0$ such that Eq. \ref{eq:c_L_family} is respected, such an adversary cannot violate the $(c,L)$-adversary assumption.
\end{lemma}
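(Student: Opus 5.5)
We argue by contradiction. Suppose a $w_a$-adversary $\mathcal{A}$ violates the $(c,L)$-adversary assumption for some $c\in[0,1]$ and $L>0$ satisfying Eq.~\ref{eq:c_L_family}. Then, with non-negligible probability, $\mathcal{A}$ produces a fork of length at least $L$ with validity ratio at least $c$. We read the validity ratio in the work-weighted sense used for $\delta$ in the original analysis: the valid blocks of the fork carry at least a fraction $c$ of the total work of the fork.

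We then bound the work of that fork from below. The fork has at least $L$ blocks and replaces (at least) the last $L$ positions of the honest chain. Because the forged blocks must respect the difficulty transition rules, and the FlyClient verifier checks that the fork's claimed work is not smaller than the honest work it competes with, the claimed work of the fork is at least the honest suffix work
\[
W \;=\; \sum_{i=n-L+1}^{n} d_i .
\]
By Eq.~\ref{eq:c_L_family}, $L$ is at least the smallest $\nu$ whose suffix sum reaches $w_a/c$. Since the suffix sums $\sum_{i=n-\nu+1}^{n} d_i$ are monotone non-decreasing in $\nu$, this gives $W \ge w_a/c$.

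Combining the two facts, the valid malicious blocks of the fork carry work at least
\[
c\cdot W \;\ge\; c\cdot \frac{w_a}{c} \;=\; w_a .
\]
With the threshold taken as a strict inequality (``more than $w_a$''), or with a non-strict boundary convention in the definition, this is at least $w_a$. Either way, with non-negligible probability $\mathcal{A}$ generates valid blocks whose cumulative work reaches the budget $w_a$, which contradicts the definition of a $w_a$-adversary.

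The degenerate case $c=0$ needs separate treatment. There $w_a/c=\infty$, the set in Eq.~\ref{eq:c_L_family} is empty, and so no finite $L$ satisfies the hypothesis; the statement is then vacuous, or we adopt the convention $\min\emptyset=\infty$.

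The main obstacle is the step that passes from the fork's length $L$ to its work $W$. It requires fixing precisely what a ``fork of length $L$ with validity ratio $c$'' means in the $(c,L)$ model: whether length counts blocks or work, and whether the ratio is measured by block count or by work. The first thing to try is the work-weighted interpretation with the fork compared against the honest suffix, which is what the verifier's sampling actually enforces. I also expect to handle the boundary case, where the valid work equals $w_a$ exactly, by adjusting the inequality in Eq.~\ref{eq:c_L_family} or the definition's threshold so that the two are consistent.
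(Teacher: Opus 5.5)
Your argument is correct and is essentially the paper's proof in contrapositive form. The paper fixes the fork's actual ratio $c_\mathcal{A}$ and dismisses $c_\mathcal{A}<c$. For $c_\mathcal{A}\ge c$ it uses the same two monotonicities (suffix sums in $\nu$, and $w_a/c$ in $c$) and the same implicit link between fork length and honest suffix work, concluding $L_\mathcal{A}\le L$.

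Your boundary worry needs no change to Eq.~\ref{eq:c_L_family} or to the definition. A violation means a fork strictly longer than $L$, which is why the paper treats $L_\mathcal{A}\le L$ as insufficient. With positive difficulties the relevant suffix work is then $\sum_{i=n-L}^{n} d_i > w_a/c$, so the valid work strictly exceeds $w_a$, as the contradiction requires.
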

\begin{proof}[Proof sketch]
By definition, a $w_a$-adversary can produce malicious blocks with at most $w_a$ valid work in them, except with negligible probability. 
Consider any possible fork produced by $\mathcal{A}$ with length $L_\mathcal{A}$ and validity ratio $c_\mathcal{A}$. 
Only two cases are possible. 
In case $c_\mathcal{A} < c$ then the $(c,L)$-adversary assumption is not violated, since the fork has an insufficient validity ratio. 
On the other hand, in case $c_\mathcal{A} \ge c$, $\mathcal{A}$ produced a fork with a correct validity ratio, but with an insufficient length $L_\mathcal{A} \le \min \left\lbrace \nu \middle| \sum_{i = n - \nu + 1}^{n} d_i \ge w_a / c_\mathcal{A} \right\rbrace \le \min \left\lbrace \nu \middle| \sum_{i = n - \nu + 1}^{n} d_i \ge w_a / c \right\rbrace \le L$. 
Thus even in this case the $(c,L)$-adversary assumption is not violated.
\end{proof}
Let us suppose a verifier that is configured according to \emph{any} of these $(c,L)$-adversary assumptions. 
Successfully attacking such a verifier requires to violate the $(c,L)$-adversary assumption, which in turn requires to violate the $w_a$-adversary assumption. 
This in practice makes the $(c,L)$ pair a degree of freedom that we can use to configure conveniently our verifier.
In other words, to be secure against the $w_a$-adversary, the verifier can be configured to be secure against \emph{any} of the possible $(c,L)$-adversaries of the family specified by Eq. \ref{eq:c_L_family}, chosen at will. 
More formally, we state the following.
\begin{theorem}[Security of the $w_a$-parametrized verifier]
Assuming a chain length $n$ and a FlyClient verifier $\mathcal{V}$ parametrized according to Eqq. 1 and 2 (Eqq. 3 and 4 in case of non-interactive) with $(c,L)$ such that Eq. \ref{eq:c_L_family} is respected, and with security parameter $\lambda$, then a $w_a$-adversary cannot make $\mathcal{V}$ accept an invalid chain with probability more than negligible with the security parameter.
\end{theorem}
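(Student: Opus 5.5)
The proof is a reduction: combine Lemma \ref{thm:c_L_family} with the original FlyClient security theorem of \cite{bunz2020flyclient}. That theorem states the following. Suppose the verifier is parametrized with $n_\mathit{det}$ and $n_\mathit{prob}$ as in Eqq. \ref{eqn:interactive_n_det}--\ref{eqn:interactive_n_prob}, or Eqq. \ref{eqn:noninteractive_n_det}--\ref{eqn:noninteractive_n_prob} in the non-interactive case, for a pair $(c,L)$. Then any $(c,L)$-adversary makes the verifier accept an invalid chain with probability at most $2^{-\lambda}$, plus the negligible probability that the $(c,L)$ assumption itself fails.

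First, fix the pair $(c,L)$ used to configure $\mathcal{V}$; by hypothesis it satisfies Eq. \ref{eq:c_L_family}. Let $\mathcal{A}$ be any $w_a$-adversary. Define the failure event $F$ as: during the attack execution, $\mathcal{A}$ produces a fork of length at least $L$ with validity ratio at least $c$. By Lemma \ref{thm:c_L_family}, $F$ implies that $\mathcal{A}$ generated valid blocks with more than $w_a$ cumulative work. By the definition of $w_a$-adversary, this happens with probability at most some negligible $\epsilon(\lambda)$.

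Second, split the success probability by conditioning on $F$:
\[
\Pr[\mathcal{V}\text{ accepts invalid chain}] \le \Pr[F] + \Pr[\mathcal{V}\text{ accepts} \wedge \neg F].
\]
On $\neg F$, $\mathcal{A}$ behaves exactly as a $(c,L)$-adversary in the execution. Concretely, every fork it can present either is shorter than $L$ or has validity ratio below $c$. The original FlyClient analysis, applied with the chosen $n_\mathit{det}=L$ and $n_\mathit{prob}$, bounds the second term by $2^{-\lambda}$. The bound depends on which protocol variant is used. In the interactive case the deterministic tail of $L$ headers catches short forks, and the $n_\mathit{prob}$ samples catch forks with too many invalid blocks. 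In the non-interactive case the extra $\log_{0.5}(c\cdot n)$ term absorbs the adversary's Fiat--Shamir grinding over possible fork points. The total is $\epsilon(\lambda)+2^{-\lambda}$, which is negligible.

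The main obstacle is making the second step rigorous, i.e.\ checking that "$\mathcal{A}$ conditioned on $\neg F$" is a legitimate $(c,L)$-adversary. The original theorem is stated for adversaries that satisfy the assumption except with negligible probability, not for conditioned executions. I would handle this by constructing a wrapper adversary $\mathcal{A}'$ that runs $\mathcal{A}$ and aborts whenever $F$ occurs. $\mathcal{A}'$ is then a $(c,L)$-adversary by construction, and its success probability differs from that of $\mathcal{A}$ by at most $\Pr[F]$.

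A secondary subtlety is that $L$ in Eq. \ref{eq:c_L_family} is defined from the difficulties $d_i$ of the honest chain of length $n$, while $\delta$ in Eq. \ref{eqn:interactive_n_prob} is defined from the same chain. Since the chain length $n$ is fixed in the statement, both quantities are well defined and consistent, so the original bound applies verbatim.
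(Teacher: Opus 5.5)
Your proposal is correct and follows the paper's own proof: Lemma~\ref{thm:c_L_family} shows the $w_a$-adversary cannot violate the $(c,L)$-assumption for the configured pair, and Theorem~1 of \cite{bunz2020flyclient} then yields the claim. Your union bound over $F$ and the aborting wrapper $\mathcal{A}'$ only make explicit the black-box invocation that the paper's one-line sketch leaves implicit.
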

\begin{proof}[Proof sketch]
By the Lemma \ref{thm:c_L_family}, the $w_a$-adversary cannot violate the $(c,L)$-adversary assumption. 
Thus, Theorem 1 in \cite{bunz2020flyclient} (security of FlyClient) directly yields the claim.
\end{proof}
Normally, the best thing would be to choose $c$ and $L$ in such a way to minimize the proof size, but this would be overly complex and would depend on the sizes and the representation format of the block headers and the MMR nodes, which are specific for each blockchain. 
We follow a more general approach, that is to minimize the total number of sampled headers. 
This approach abstracts away from the specific blockchain and approximates well the minimization of the proof size in those cases in which the size of the block headers dominates that of the MMR nodes, like in Zcash.

Following the B\"unz-Kiffer parametrization under a $(c,L)$-adversary assumption, the verifier must perform $n_\mathit{det}$ deterministic samplings plus $n_\mathit{prob}$ probabilistic ones, 
with $n_{det}$ and $n_{prob}$ chosen as in Eqq. \ref{eqn:interactive_n_det} and \ref{eqn:interactive_n_prob} or Eqq. \ref{eqn:noninteractive_n_det} and \ref{eqn:noninteractive_n_prob}, respectively if we are running the interactive or the non-interactive version of the protocol.
Therefore, we find the $(c = c^*,L = L^*)$ pair that minimizes $n_\mathit{det} + n_\mathit{prob}$, with $L^* \ge \min \left\lbrace \nu \middle| \sum_{i = n - \nu + 1}^{n} d_i \ge w_a / c^* \right\rbrace$. 
Crucially, in case the block difficulty has low variability like it happens in Zcash and other blockchains, we can assume that the average block difficulty of the last $L^*$ blocks ($\tilde{d}$) does not vary with $L^*$. 
In formulas: $\forall L^*: 1/L^* \cdot \sum_{i = n - L^* + 1}^{n} d_i = \tilde{d}$. 
Therefore, the relation between $c^*$ and $L^*$ simplifies in $c^* \cdot L^* = w_a / \tilde{d} \triangleq n_a$. 
We call $n_a$ the \emph{adversarial block budget}, which represents the budget of the adversary measured in terms of number of blocks instead of amount of work.
By formally computing the derivative of $n_\mathit{det} + n_\mathit{prob}$ over $c^*$ and imposing it to zero, we find that $c^*$ must fulfill the following first-order conditions:
\begin{equation}
\label{eqn:interactive_opt_c}
\begin{aligned}
\lambda (c^*)^2 \ln 2
\left(
-\frac{\ln c^*}{a^2 c^*}
-\frac{1}{a c^*}
\right)
&= n_a
\left(
1-\frac{\ln c^*}{a}
\right) \\
&\quad \cdot
\ln^2
\left(
1-\frac{\ln c^*}{a}
\right),
\end{aligned}
\end{equation}
for the interactive case, and:
\begin{equation}
\label{eqn:noninteractive_opt_eq}
\left(\ln \frac{b}{a} \cdot n_a + c^*\right) \ln \frac{b}{a} \cdot a \cdot b + c^* \ln \frac{n_a}{n} \cdot \left(\lambda \ln 2 +\ln(c^* \cdot n)\right) = 0
\end{equation}
for the non-interactive one, where $a = \ln \left(\frac{n_a}{c^* n}\right)$, $b = \ln \left(\frac{n_a}{(c^*)^2 n} \right)$, and $c^* \in [0,1]$. 
The solutions to such equations are not expressible in closed forms, so they must be found numerically. 
Figure \ref{fig:opt_c} shows the trend of $c^*$ with a security level of $\lambda=50$, and varying the blockchain length and the adversarial block budget, for both the interactive and non-interactive protocols. 
\begin{figure}[t]
    \centering
    \includegraphics[width=0.9\linewidth]{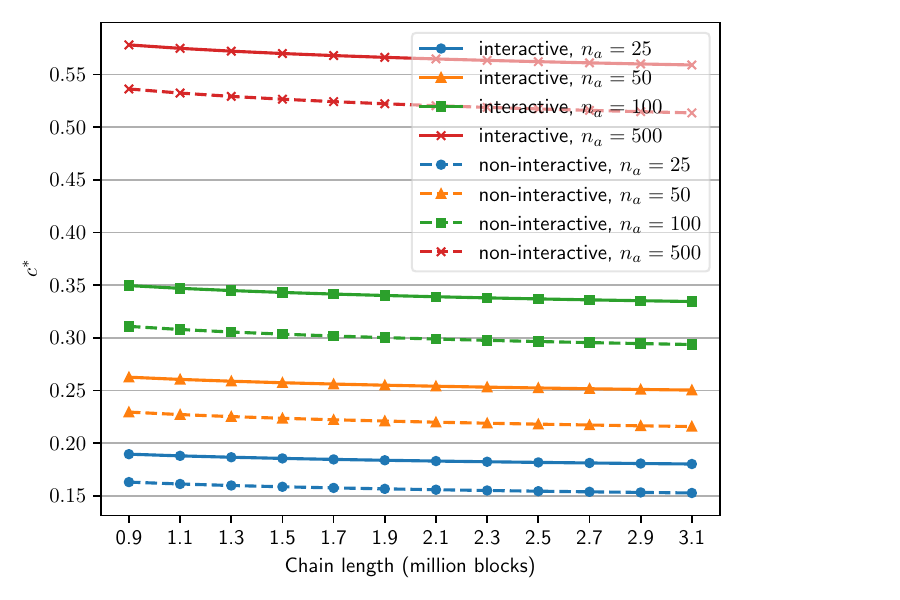}
    \caption{Optimal $c^*$ for interactive and non-interactive proofs, varying the blockchain length and the adversarial block budget.}
    \label{fig:opt_c}
\end{figure}
The optimal $L^*$ can then be computed simply as $L^* = n_a/c^*$.

As a concrete example, let us consider a FlyClient verifier configured with a traditional $(c,L)$-adversary model with $c=0.5$ and $L=100$, a target security level $\lambda=50$, and a chain of 3 million blocks. 
Breaking such an assumption requires an expected adversarial block budget of $n_a = c \cdot L = 50$ blocks, which in turn requires an expected budget of $\mathit{Budget}_a = 20,833$ USD, as computed in Section \ref{sec:wa_adversary}.
This configuration results in an interactive FlyClient proof of 598 total samplings, and a non-interactive one of 802 total samplings. 
On the other hand, by using a $w_a$-adversary model with the same expected block budget ($n_a = c \cdot L = 50$), we obtain an optimal $(c^*, L^*)$ pair of (0.25, 200) for the interactive FlyClient proof, and of (0.2161, 231) for the non-interactive one. 
This configuration results in an interactive proof of 423 samplings and a non-interactive one of 504 samplings (-37\% saving). 
Considering that each Zcash header is 1,487 bytes if represented in binary format, the $w_a$-adversary parametrization decreases the amount of total header data to download from 598 $\times$ 1,487 bytes = 0.848 MiB to 423 $\times$ 1,487 bytes = 0.600 MiB for the interactive case (-29\% saving), and from 802 $\times$ 1,487 = 1.137 MiB to 504 $\times$ 1,487 = 0.715 MiB for the non-interactive one (-37\% saving). 
Note that this computation does not consider the MMR nodes that the verifier must download together with the block headers, but these grow roughly linearly with the number of samplings, so the percentages of saving should be confirmed. 
In Section \ref{sec:prover} we will consider the MMR nodes too in the FlyClient proof size. 
This change in the adversary model does not appear to introduce additional security risks under the same adversarial budget assumptions, because as explained above an attacker producing a $<100$-block long fork with 0.5 validity ratio does not have the necessary budget to violate the $(c^*=0.25,L^*=200)$-adversary or the $(c^*=0.2161, L^*=231)$-adversary assumptions, so the verifier also defends against such adversaries.

%% file: sections/05_prover.tex
\section{Prover Implementation}
\label{sec:prover}
In this section we estimate the extra resources that are needed for a Zcash full node for offering FlyClient prover functionalities. 
To do this, we extended Zebrad\footnote{\url{https://github.com/ZcashFoundation/zebra} (accessed on 25 March 2026).}, which is the Rust-based official full node implementation of Zcash, developed by the Zcash Foundation as part of the Zebra project. 
Zebrad replaces the legacy Zcashd full node implementation, written in C++ and based on Bitcoin Core, which became obsolete in 2025\footnote{\url{https://z.cash/support/zcashd-deprecation/} (accessed on 25 March 2026).}. 
Such a FlyClient-prover extension of Zebrad may have independent value for future research on FlyClient, so we made it available to the Zcash community\footnote{\url{https://github.com/Metalcape/zcash-flyclient} (accessed on 25 March 2026).}.

We conveniently chose to put all the protocol's intelligence on the verifier side, that is the prover is only a dumb answerer to the verifier's requests, performed via Remote Procedure Calls (RPCs). 
Zebrad implements an HTTP-based JSON-RPC protocol like Bitcoin, by which a client can perform block queries or send transactions to be included in the blockchain. 
The RPC functions that the prover must implement are the following. 
\begin{itemize}
\item A function to get the tip header, the number of blocks, and the total work from the genesis to the tip header. 
We used the native function \textproc{getBlockchainInfo} for this.
\item A function to get a block header given its height in the chain. 
This is used by the deterministic and the probabilistic samplings. 
We used the native function \textproc{getBlockHeader} for this.
\item A function to get an MMR node given its index. This is used to retrieve the MMR proofs of the sampled blocks. 
We introduced a new function \textproc{getHistoryNode} for this.
\item A function to get the \texttt{authDataRoot} of a block given its height. 
This is necessary to allow the verification of MMR proofs. We introduced a new function \textproc{getAuthDataRoot} for this.
\item A function to get the total work from the genesis to a given block height. 
This is used to retrieve the total work associated with the oldest deterministically sampled block, which in turn allows us to compute the probability distribution by which we perform the probabilistic sampling in case of variable difficulty. 
We introduced a new function \textproc{getTotalWork} for this.
\item A function to get the height of the first block having the total work equal to or greater than a given amount. 
This is used by probabilistic sampling in case of variable difficulty. 
We introduced a new function \textproc{getHeightWithTotalWork} for this.
\end{itemize}
The most resource-consuming RPC function is \textproc{getHistoryNode}, because it needs the prover to maintain a database of MMR nodes. 
The current Zebrad implementation lacks this functionality, and it maintains in memory only a minimal fraction of MMR nodes, namely the MMR peaks, which are needed to calculate the chain history root of the next block. 
We implemented the routines to populate the MMR nodes database contextually with the download of the blocks or after it. The first routine is performed by an entity which installs Zebrad from scratch. 
The second routine is performed by an entity which already has a running Zebrad node synchronized with the current blockchain, but without prover functionalities, and it wants to offer such functionalities too. 
We measured the time needed to upgrade the database to store the individual MMR tree nodes and compared it to the time needed to fully synchronize with the blockchain, with both a legacy and a FlyClient-capable Zebrad instance. 
We run the three tests on the same virtual machine with high-bandwidth network access. 

Figure \ref{fig:db_upgrade_time} shows the results. 
\begin{figure}[t]
    \centering
    \includegraphics[width=0.9\linewidth]{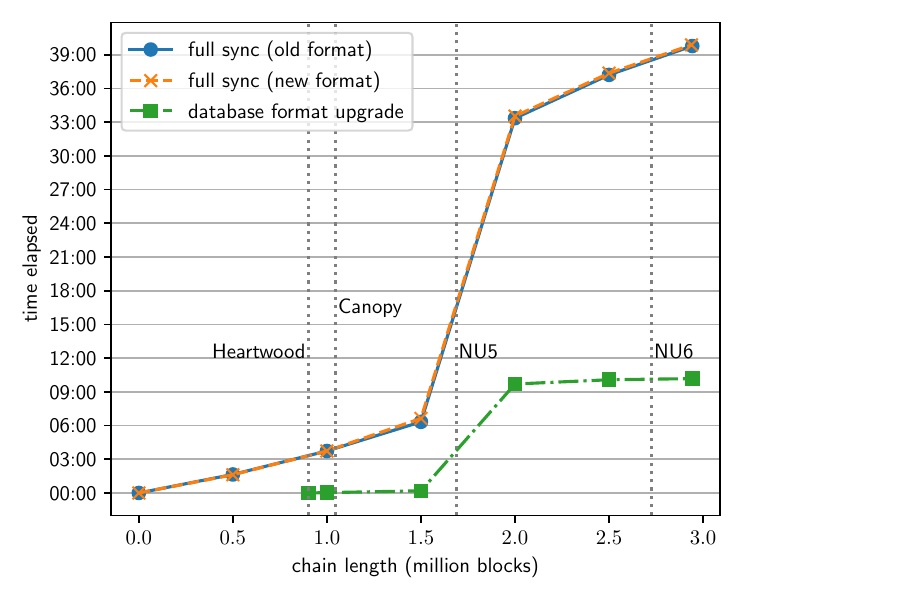}
    \caption{Database upgrade time.}
    \label{fig:db_upgrade_time}
\end{figure}
The time for a full sync of a FlyClient-capable Zebrad instance is comparable to a legacy one. 
This is because the impact of the generation and writing of MMR nodes to the database on the total execution time of the full sync is negligible. 
The only additional operation needed by the modified full node is the database write operation, because the nodes already had to be calculated to keep the HistoryTree of the tip header up to date. 
Therefore, FlyClient capability does not introduce sensitive degradation with respect to the sync time. 
If a legacy instance with an old database format gets upgraded, it takes about 10 hours to upgrade the database to the new format with a chain of about 3M blocks, while a full sync requires almost 40 hours.

It can be seen from the graph that the upgrade time spikes around 1.5 million blocks. 
This is due to a ``spam attack'' that interested the Zcash blockchain in 2022 and caused its size to increase greatly. 
This spam attack impacts also the database upgrade time, because generating history nodes requires reading back finalized blocks from the database, and while the mining rate of the blockchain remains approximately constant, each block is much larger due to the large number of transactions caused by the attack. 
This results in larger database read times over the span of the attack window. 
In general, we can see that outside the spam attack window both the upgrade time and full sync time increase linearly with the length of the blockchain, while the full sync time is impacted by the format change in a negligible way.

Figure \ref{fig:flyclient_pseudocode} shows our reference FlyClient verifier, described as pseudo-code that uses the aforementioned RPC functions. 
\begin{figure}[t]
\centering
\footnotesize

\subfloat[FlyClient verifier\label{alg:verifier}]{
\begin{minipage}{\linewidth}
\input{algorithms/alg_verifier}
\end{minipage}
}

\vspace{0.8em}

\subfloat[Checking a prover\label{alg:check_prover}]{
\begin{minipage}{\linewidth}
\input{algorithms/alg_check_prover}
\end{minipage}
}

\vspace{0.8em}

\subfloat[Sampling at a height\label{alg:sample_at_height}]{
\begin{minipage}{\linewidth}
\input{algorithms/alg_sample_at_height}
\end{minipage}
}

\caption{FlyClient reference verifier.}
\label{fig:flyclient_pseudocode}
\end{figure}
The \textproc{FlyClientVerifier} function (Figure \ref{alg:verifier}) tests the trustworthiness of a set of verifiers $P$, starting from the one that declared the chain with the most work. 
The \textproc{checkProver} function (Figure \ref{alg:check_prover}) tests a single prover $p$, by first performing $n_\mathit{det}$ deterministic samplings (Lines 3--9) and then $n_\mathit{prob}$ probabilistic ones (Lines 10--17). 
In doing this, it maintains two caches of downloaded headers ($H$) and MMR nodes ($M$), to avoid downloading them multiple times.
The \textproc{sampleWork} function (called in Line 12, not shown as pseudo-code) produces a random number between the two extremes specified as arguments, using the probability distribution specified by B\"unz et al. \cite{bunz2020flyclient} (Theorem 2).
Finally, the \textproc{sampleAtHeight} function (Figure \ref{alg:sample_at_height}) samples a header at a given height $h$ from prover $p$, and it is used for the probabilistic sampling phase.
To take into account the MMR reboots at the network upgrades, it checks if the header to sample lies before the current consensus branch, and it samples also the initial block of the current consensus branch in case (Lines 3--8).
We take into consideration also variants to this verifier, but we will omit their pseudo-code for the sake of brevity. Namely, we consider the following alternative verifiers.
\begin{itemize}
\item \emph{Fixed-difficulty verifier.} A FlyClient verifier that assumes a blockchain with fixed difficulty, and verifies the chain length in terms of number of blocks rather than their total work. 
This verifier runs a simplified protocol, because it skips the checks on difficulty transitions and avoids to use the difficulty-related RPC functions \textproc{getTotalWork} and \textproc{getHeightWithTotalWork}. 
Such a verifier is considered by the original FlyClient paper \cite{bunz2020flyclient}, but it is only for theoretical interest since it is highly insecure with a real-life blockchain. 
Indeed, a malicious prover could build a very long valid blockchain easily, by simply mining all the blocks with a very low difficulty.
\item \emph{Cache-less verifier.} A FlyClient verifier that does not maintain caches for headers and nodes already downloaded. 
This verifier could be suitable for very constrained devices that lack memory capacity. 
Of course it may download multiple times the same header or the same node, so in general the proof size increases with respect to the reference FlyClient verifier.
\end{itemize}

Fig. \ref{fig:proof_size_repr} shows the size of the interactive proof downloaded by the reference verifier under three different representations, namely JSON, binary, and zipped binary, varying the length of the chain. 
\begin{figure}[t]
    \centering
    \includegraphics[width=0.9\linewidth]{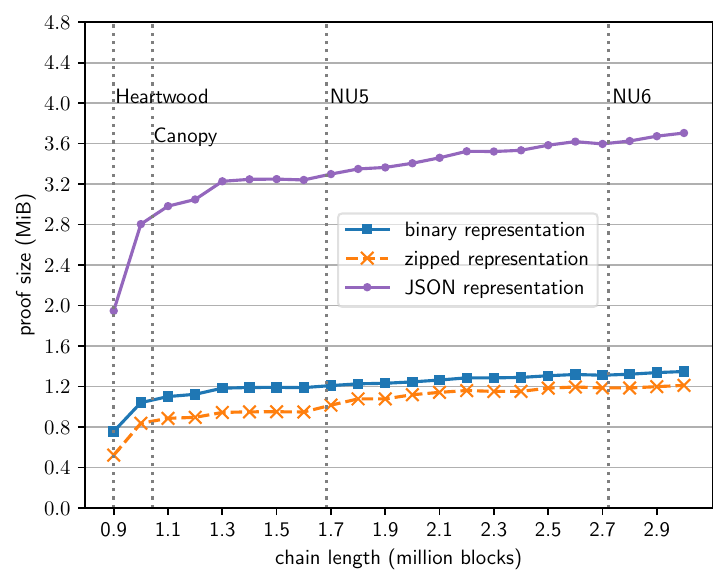}
    \caption{FlyClient proof size with respect to the proof representation.}
    \label{fig:proof_size_repr}
\end{figure}
In the JSON representation, both the MMR nodes and the block headers constituting the ancestry proofs are represented as textual JSON structures, while the Equihash solutions inside block headers are represented as long strings of hexadecimal digits. 
This is the basic and less efficient proof representation. 
In the binary representation, both the MMR nodes and the block headers are represented as bytes strings, using the native Zebrad representation. 
In the zipped representation, we applied a compression algorithm to the binary representation of each MMR node and each block header. 
In particular, we employed the gzip algorithm, as it is often done in the HTTP protocol for saving bandwidth. 
The proof size is computed as the total number of bytes downloaded by the verifier during the proving protocol. 
We measure only the payload size, that is we do not count the overhead of transport or application protocols like TCP or HTTP. 
The bytes uploaded by the verifier are negligible with respect to the downloaded ones, so we omit them in the present paper. 
We assumed an adversarial block budget of 50, corresponding to an expected budget of 20,833 USD (see Section \ref{sec:wa_adversary}).
To obtain statistical significance, every point of the figure is averaged over 30 independent repetitions of the same experiment. 
We computed the 95\%-confidence intervals but their width is in the order of 14--18 KiB, so we omitted them in the figures. 
The vertical lines are the heights of the various network upgrades. 
We can see that the binary representation more than halves the proof size compared to the JSON one, while the zipped representation saves few hundreds of KiB with respect to the binary one. 
The JSON representation is of course not recommendable for IoT applications that must save bandwidth. 
In every representation, the majority of such proof sizes is constituted by Equihash solutions, each of which takes 1344 bytes under the binary representation. 
All the proof sizes follow logarithmic trends, with the zipped representation being around 1.2 MiB at 3 million blocks, which were completed on July 20, 2025. 
Note that the various MMR rebootings due to the network upgrades have little impact on the proof sizes, as they perturb almost negligibly its logarithmic trend. 

Fig. \ref{fig:proof_size_ni_repr} shows the size of the non-interactive proof under the same three representations, varying the length of the chain. 
\begin{figure}[t]
    \centering
    \includegraphics[width=0.9\linewidth]{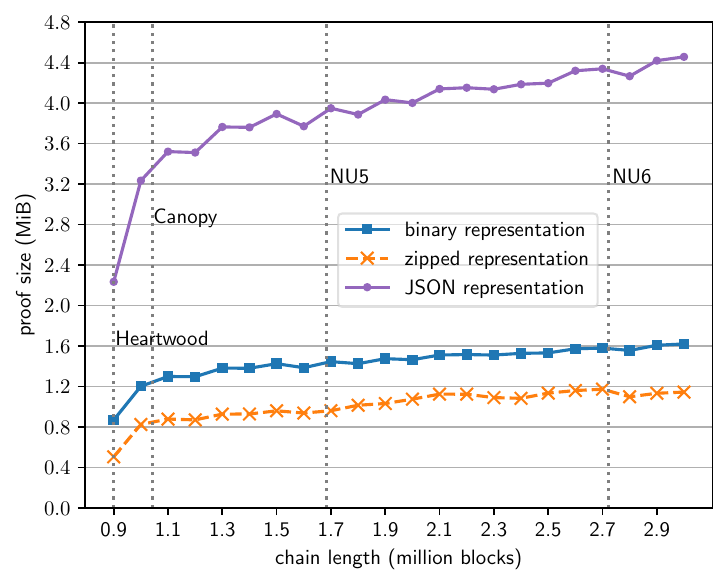}
    \caption{Non-interactive FlyClient proof size with respect to the proof representation.}
    \label{fig:proof_size_ni_repr}
\end{figure}
This time, in the zipped representation we applied the gzip compression algorithm over the \emph{whole} proof, rather than the single MMR nodes and block headers. 
This is to emulate a realistic use case in which an interactive proof is downloaded by separate HTTP requests, so it is compressed piece-wise, while a non-interactive proof is sent as a unique compressed file. 
All the proof sizes confirm the logarithmic trends and the little impacts of network upgrades. The JSON and the binary representations are some hundreds of KiB larger than their interactive counterparts, which constitutes the cost of the non-interactiveness. 
On the other hand, the zipped representation reaches quite the same size of the interactive counterpart, that is 1.2 MiB at 3 million blocks. 
This probably stems from the better effectiveness of the compression when applied to the proof as a whole. 
Assuming an hypothetical trustless bridge application, a FlyClient proof should be passed to a smart contract, possibly fragmented in several method calls. 
Therefore, the proof size translates directly to a monetary cost. 
After the Pectra hard fork of May 2025, a data-heavy transaction is charged using the EIP-7623 \cite{eip7623}, which specifies 40 gas per non-zero byte of calldata. 
Therefore, passing a FlyClient proof of 1.2 MiB to a smart contract takes approximately 50,340,000 gas. 
Considering a gas price of 0.125 Gwei per gas\footnote{\url{https://etherscan.io/gastracker} (accessed on 7 April 2026). These values are taken as a snapshot and may vary over time.} and an ETH price of 2,100 USD per ETH\footnote{\url{https://etherscan.io/gastracker} (accessed on 7 April 2026).}, this takes 50,340,000 $\times$ 0.125 $\times$ $10^{-9}$ $\times$ 2,100 = 13.21 USD. 
In other words, we are using only 13.21 USD to defend against an adversary that employs 20,833 USD.
This should be acceptable for infrequent high-value cross-chain transfers, but less acceptable for frequent or low-value ones. 
In Section \ref{sec:optimizations} we will see how to decrease the proof size and thus the gas cost of a trustless bridge. 
Note that we do not consider to store the FlyClient proof as EIP-4844 blobs \cite{eip4844}. 
This is on purpose, because even if blobs would dramatically decrease the costs, 
they would also weaken the trust model. 
Indeed, a smart contract cannot directly access blob data, therefore validators would not check the proof to be correct.
From now on, we will consider the zipped representation only, as it is the more efficient one and it has no side costs on other aspects.

Fig. \ref{fig:proof_size_verifier} shows the size of the proof with the three verifiers, represented in zipped binary format, varying the length of the chain.
\begin{figure}[t]
    \centering
    \includegraphics[width=0.9\linewidth]{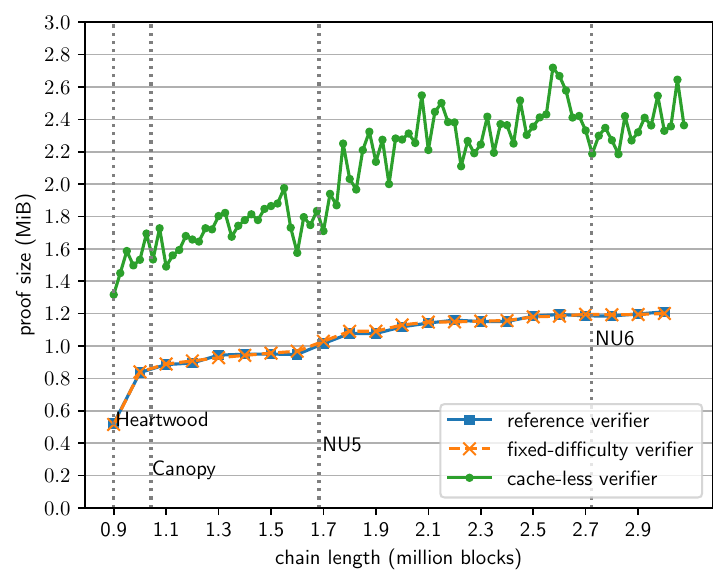}
    \caption{FlyClient proof size with respect to the type of verifier.}
    \label{fig:proof_size_verifier}
\end{figure}
We can see that the proof size of the fixed-difficulty verifier closely follows the one of the reference verifier. 
This can be explained by the fact that PoW difficulty has increased quite slowly in Zcash, so sampling blocks based on their height turns out to be quite similar to doing it based on their cumulative work. 
On the other hand, the proofs of the non-cached verifier are sensibly bigger than the ones of the reference verifier. 
The proof size of this verifier is also non-monotone with the length of the chain, as its plot presents peaks and valleys. 
This can be explained by the fact that the non-cached verifier must download multiple times the peak nodes of the ``old'' mountains of the MMR, whose number changes in a non-monotone way with the chain length. 
For example, an MMR covering 255 blocks has 8 peak nodes, while an MMR covering just one more block has only 1 peak node. 
In the reference verifier this phenomenon also happens, but its effect is mitigated by the node caching, that allows the verifier to download the peak nodes of the old mountains only once.

Fig. \ref{fig:proof_size_ni_verifier} shows the size of the non-interactive proof with the reference and the fixed-difficulty verifiers, varying the length of the chain.
We omitted the cache-less verifier here because it does not apply to non-interactive proofs.
\begin{figure}[t]
    \centering
    \includegraphics[width=0.9\linewidth]{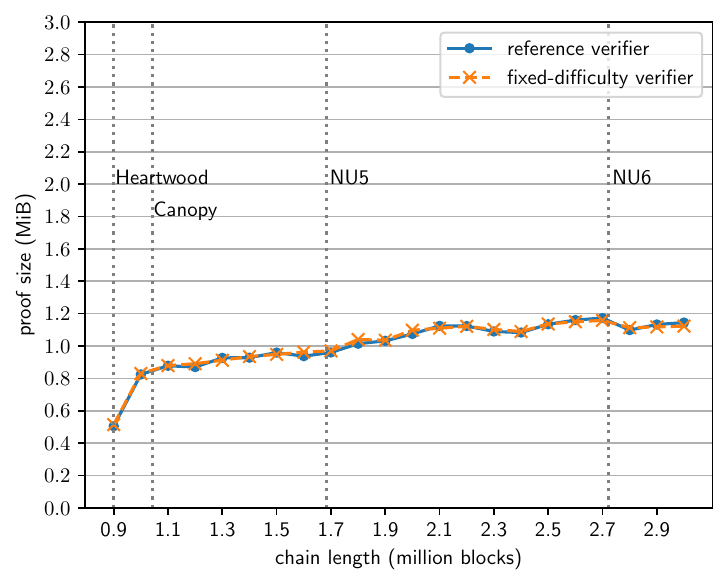}
    \caption{Non-interactive FlyClient proof size with respect to the type of verifier.}
    \label{fig:proof_size_ni_verifier}
\end{figure}
All the general trends of the interactive proofs (see Fig. \ref{fig:proof_size_verifier}) are confirmed. 
Even here, the proof size of the fixed-difficulty verifier closely follows the one of the reference verifier.

%% file: algorithms/alg_verifier.tex
\begin{algorithmic}[1]
\Function{FlyClientVerifier}{$P, n_{\mathit{det}}, n_{\mathit{prob}}$}
  \ForAll{$p \in P$}
    \State $(n_p,w_p,t_p)\gets p.\textsc{getBlockchainInfo}()$
  \EndFor
  \ForAll{$p \in P$ in decreasing order of $w_p$}
    \If{\Call{checkProver}{$p,n_p,t_p,n_{\mathit{det}},n_{\mathit{prob}}$}}
      \State \Return $(p,n_p,w_p,t_p)$
    \EndIf
  \EndFor
  \State \Return $\bot$
\EndFunction
\end{algorithmic}

%% file: algorithms/alg_check_prover.tex
\begin{algorithmic}[1]
\Function{checkProver}{$p,n,t,n_\mathit{det},n_\mathit{prob}$}
  \State $H\gets\emptyset,\; M\gets\emptyset$
  \For{$h = n - n_{\mathit{det}} - 1,\dots,n$}
    \State $B\gets p.\textsc{getBlockHeader}(h)$
    \If{invalid difficulty transition in $B$ or invalid PoW in $B$ or $B$ not chained to $t$}
      \State \Return \textbf{false}
    \EndIf
    \State $H[h]\gets B$
  \EndFor
  \State $w_\mathit{det} \gets p.\textsc{getTotalWork}(n-n_{\mathit{det}}-1)$
  \For{$j=1,\dots,n_{\mathit{prob}}$}
    \State $x\gets \textsc{sampleWork}(0,w_\mathit{det})$
    \State $h\gets p.\textsc{getHeightWithTotalWork}(x)$
    \If{\textbf{not} \Call{sampleAtHeight}{$p,h,t,H,M$}}
      \State \Return \textbf{false}
    \EndIf
  \EndFor
  \State \Return \textbf{true}
\EndFunction
\end{algorithmic}

%% file: algorithms/alg_sample_at_height.tex
\begin{algorithmic}[1]
\Function{sampleAtHeight}{$p,h,t,H,M$}
  \State $u\gets$ last network-upgrade height before $t$
  \If{$h<u$}
    \If{\textbf{not} \Call{sampleAtHeight}{$p,h,u,H,M$}}
      \State \Return \textbf{false}
    \EndIf
    \State $h\gets u$
  \EndIf
  \LineIf{$h\in H$}{\Return \textbf{true}}
  \State $B\gets p.\textsc{getBlockHeader}(h)$
  \LineIf{invalid PoW in $B$}{\Return \textbf{false}}
  \State $I\gets$ node indices of ancestry proof for $h$ in the MMR rooted at $t$
  \ForAll{$i\in I\setminus \mathrm{dom}(M)$}
    \State $M[i]\gets p.\textsc{getHistoryNode}(i)$
  \EndFor
  \State $\pi\gets M|_I$
  \If{implausible difficulty changes in $\pi$ or $\pi$ invalid proof for $(B,t)$}
    \State \Return \textbf{false}
  \EndIf
  \State \Return \textbf{true}
\EndFunction
\end{algorithmic}

%% file: sections/06_optimizations.tex
\section{Optimizations}
\label{sec:optimizations}

We introduce here two optimizations of the reference verifier of Fig. \ref{fig:flyclient_pseudocode} that aim at minimizing the size of the interactive and non-interactive proofs. 
The objective is to improve the bandwidth efficiency of IoT devices that execute the interactive FlyClient protocol, but also to minimize the cost of storing a non-interactive FlyClient proof on a smart contracts for trustless bridge applications. 
The two optimizations are as follows: (1) \emph{cumulative proof}, which consists in downloading a single MMR proof that proves the inclusion in the blockchain of all the sampled blocks altogether, and (2) \emph{distilled proof}, which requires a change in the consensus algorithm specifically aimed at saving FlyClient proof size. 
We describe them both in the following sections.

\subsection{Cumulative Proof}
The FlyClient proof as currently described by the literature \cite{bunz2020flyclient,nemoz2021deployment,perazzo2024smartfly} and downloaded by the reference verifier of Fig. \ref{fig:flyclient_pseudocode} is suboptimal from the size point of view. 
The reason is that, in many cases, the verifier downloads nodes that were computable offline from other already downloaded nodes or headers. 
As an example, let us consider the toy MMR of Fig. \ref{fig:cumulative_proof}, and let us assume that the sampled headers are $B_3$ and $B_6$. 
\begin{figure}[t]
    \centering
    \includegraphics[trim={2.65cm 8.06cm 21.27cm 2.61cm},clip,width=0.9\linewidth]{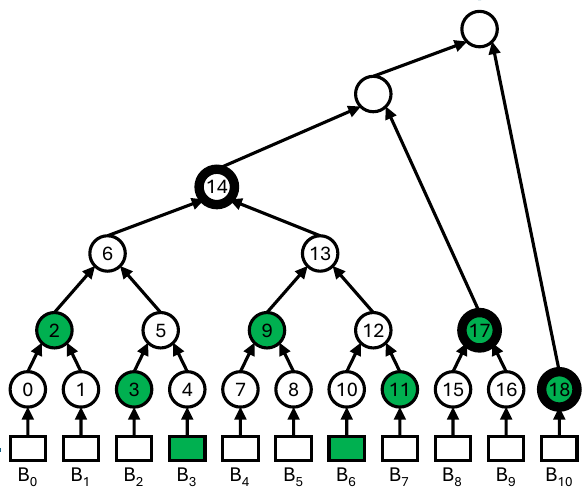}
    \caption{Example of cumulative proof of ancestry.}
    \label{fig:cumulative_proof}
\end{figure}
The reference verifier would download separately the ancestry proof relative to $B_3$ and that relative to $B_6$. 
Both proofs consist of four nodes, respectively 2, 3, 13, 17, 18 for the $B_3$'s proof and 6, 9, 11, 17, 18 for the $B_6$'s one. 
Considering that nodes 17 and 18 are cached after the download of the first proof, the total number of nodes downloaded by the reference verifier is 8. 
However, some of these nodes were computable from others. 
For example, node 6 was computable from nodes 2, 3, and the hash of header $B_3$.
Similarly, node 13 was computable from nodes 9, 11, and the hash of header $B_6$. To capture this optimization, we introduce the concept of cumulative ancestry proof.

Given a set of sampled headers, their cumulative ancestry proof is constituted by the minimal set of nodes that covers only the spaces between the sampled headers. 
In the example of Fig. \ref{fig:cumulative_proof}, the cumulative ancestry proof consists of nodes 2, 3, 9, 11, 17, 18 (the green MMR nodes in the figure). 
A verifier that downloads the cumulative ancestry proof will thus download 6 nodes instead of 8. 
It is easy to show that we can reconstruct the chain history roots relative to the top header and to all the sampled headers from the sampled headers themselves plus their cumulative ancestry proof. 
In addition, it is possible to check the plausibility of the difficulty variations from the cumulative ancestry proof, since it covers the whole blockchain. 
Thus, the cumulative ancestry proof is an efficient replacement for the singular ancestry proofs of the sampled blocks, and it has no side costs on other aspects. 
Note that this optimization decreases the number of nodes to be downloaded by the verifier, but it does not decrease the size of both nodes and headers. 
To do this, we must change their format, which in turn implies a change in the consensus algorithm. 
This will be the aim of the second optimization.

\subsection{Distilled Proof}
The current format of the Zcash headers is particularly inefficient for deploying FlyClient verifiers. 
This is mainly because each header contains an Equihash solution, which occupies 1344 bytes. 
Hence, the verifier must download at least $(n_\mathit{det}+n_\mathit{prob}) \cdot 1344$ bytes. 
The situation would improve if Zcash used a different PoW function, for example a double-SHA256 like Bitcoin, which does not require to store any solution inside headers. 
The reason why Zcash uses Equihash is to obtain a certain degree of ASIC-resistance, due to the memory-hardness of Equihash. 
However, other memory-hard hash functions have been designed, which do not require solution storage, for example Argon2 and its variants \cite{biryukov2016argon2}. 
The Zcash community is evaluating to replace Equihash with other more space-efficient hashes\footnote{\url{https://gist.github.com/zmanian/adb4d41a725fe64ccfeb9a435e6d22bd} (accessed on 25-03-2026).}, but no consensus has been reached yet.

A distilled proof is a cumulative proof that also minimizes the quantity of data that a FlyClient verifier must download to perform all the necessary checks. 
This requires applying three modifications to the current Zcash consensus algorithm, so it is not immediately deployable.
First, the memory-hard problem must be changed in such a way to admit small solutions. 
As we said before, an Equihash solution is 1344 bytes in size, and it must be included in each header downloaded by the verifier to allow the verifier to check its correctness. 
The most radical choice would be to get rid of the memory-hard problem so that 
no solution needs to be stored in the headers.
The Bitcoin consensus works in this way. 
Of course, this choice would jeopardize completely the ASIC-resistance of the consensus. 
If we want to save ASIC-resistance, a trade-off is to adopt a memory-hard problem that requires small solutions, like Ethash \cite{wood2014ethereum} adopted by the pre-Merge Ethereum and by Ethereum Classic\footnote{\url{https://ethereumclassic.org/} (accessed on 25-03-2026).}, or RandomX\footnote{\url{https://github.com/tevador/RandomX/blob/master/doc/specs.md} (accessed on 25-03-2026).} adopted by Monero. 
We assume to adopt Ethash as a PoW algorithm, whose memory-hard problem solution (called \emph{mixhash}) is 32 bytes only. 
In this way, a Zcash header will decrease from 1487 bytes to 175.

The second modification requires that the final hash whose result is compared to the target is computed not on the header, but on its hash plus those fields that are strictly necessary to the FlyClient protocol, that are the memory-hard problem solution, the chain history root, the target, and the timestamp. 
This allows the verifier to avoid downloading the actual header, but only a ``distilled'' version of it, which includes all the fields necessary to perform the FlyClient checks. 
Assuming to use SHA-256 to compute the header's hash, each distilled Zcash header will decrease again from 175 bytes to 104. Note that this trick is applicable to the generic PoW-consensus cryptocurrency, making it ``FlyClient-friendly'' with virtually no cost. 
Note also that the header distillation can boost the performance of SPV light clients too, since they can download the distilled headers instead of the full ones. 
Of course, in this case each distilled header must also contain an extra 32-byte field for the previous block's hash.

The third modification requires that the MMR nodes are ``distilled'' too, in the sense that they contain the fields strictly necessary to the FlyClient protocol, and only a hash of the other fields. 
Indeed, the fields necessary for FlyClient are child nodes' hash, the start and the end heights, the start and the end difficulties, the start and the end timestamps, and the cumulative work. 
Zcash MMR nodes contain some additional fields which are not strictly necessary for the FlyClient checks, which are the start and the end Sapling roots, the number of Sapling transactions, and from NU5 onward the start and the end Orchard roots, and the number of Orchard transactions. 
Hashing the unnecessary fields allow the verifier to avoid downloading them during the FlyClient protocol, and to download them afterwards if interested. 
Assuming to use SHA-256 to compute the fields' hash, each distilled MMR node will decrease from 244 bytes to 140. 

\subsection{Experimental Evaluation}
In this section we experimentally evaluate the performance improvements carried by the previously described optimizations to the size of the FlyClient proof. 
Fig. \ref{fig:proof_size_opt} shows the interactive proof size of the reference verifier represented in zipped binary format respectively without any optimization, with cumulative proof, and with cumulative distilled proof, varying the length of the chain. 
\begin{figure}[t]
    \centering
    \includegraphics[width=0.9\linewidth]{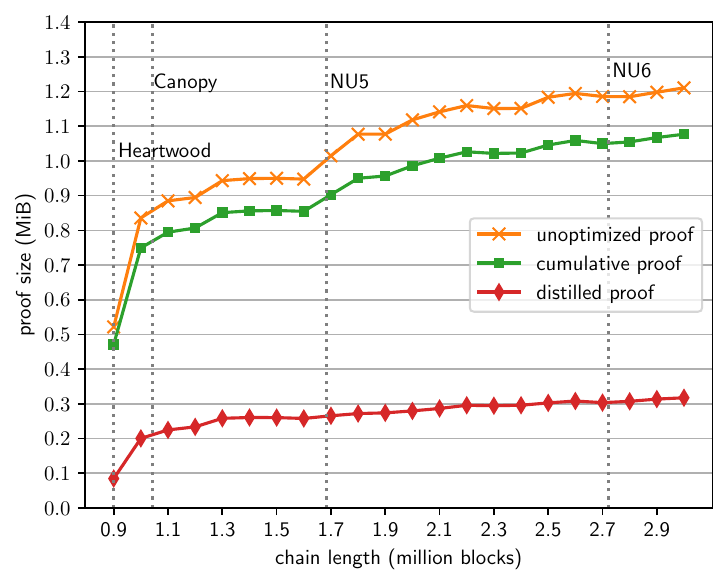}
    \caption{FlyClient proof size under different optimizations.}
    \label{fig:proof_size_opt}
\end{figure}

For the distilled proof, we simulated that the modifications to the PoW consensus algorithm were applied just after the Heartwood network upgrade, so that we can conveniently compare the distilled proof with the other proofs. 
In a real deployment, consensus modifications would be applied during a future network upgrade, but the performance improvement will be eventually the same. 
It can be seen that the cumulative proof carries only a measurable but moderate reduction of the proof size, while the distilled one carries a bigger improvement. 
Of course, the downside of the distilled proof is that it requires a hard fork, while the cumulative proof is applicable just now without changing anything in the consensus algorithm.
Fig. \ref{fig:proof_size_ni_opt} shows the non-interactive proof size, varying the length of the chain.  
\begin{figure}[t]
    \centering
    \includegraphics[width=0.9\linewidth]{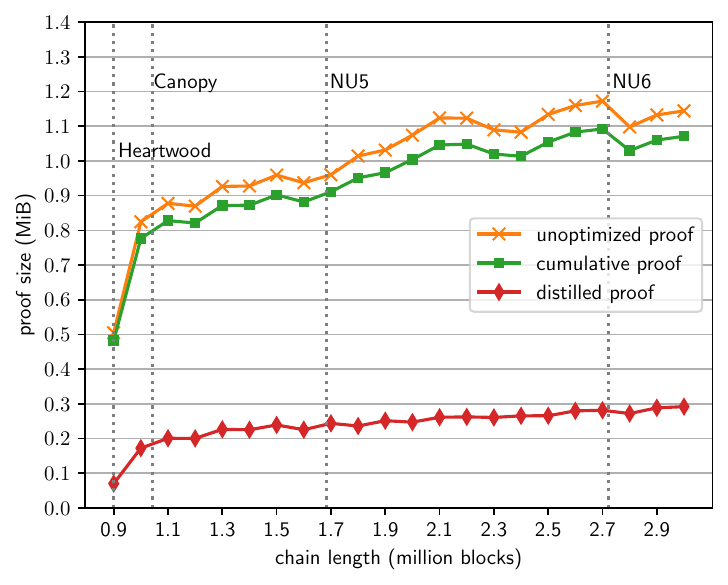}
    \caption{Non-interactive FlyClient proof size under different optimizations.}
    \label{fig:proof_size_ni_opt}
\end{figure}
All the trends of the interactive proof are roughly confirmed.
Note that the increase in proof size due to the non-interactiveness is well compensated by the whole-proof compression.
The cumulative proof saves less space than in the interactive case. This can be explained because the whole-proof compression is already very efficient in saving space.
A non-interactive cumulative proof is below 1.09 MiB with 3 million blocks, which takes approximately 
45,720,000 gas to pass it to a smart contract for trustless bridge applications, equivalent to 12.00 USD (-9.16\% compared to not using cumulative proofs).
The non-interactive distilled proofs are even smaller than the interactive ones.
As an example, a non-interactive distilled proof of a chain with 3 million blocks is 320 KiB, which takes approximately 
13,110,000 gas to pass it to a smart contract, equivalent to 3.44 USD (-71.33\% compared to cumulative proofs without distillation).

%% file: sections/07_conclusions.tex
\section{Conclusions and Future Work}
\label{sec:conclusions}

This paper takes a concrete step toward bridging the gap between the theoretical design of FlyClient and its practical deployment in real-world blockchain systems. 
We addressed both modeling and implementation challenges, providing insights that are directly relevant for production environments.
First, we introduced the $w_a$-adversary model as an alternative to the classical $(c,L)$-adversary. 
By expressing the adversary in terms of an economic budget, the model offers a more intuitive and deployment-oriented perspective, enabling practitioners to parametrize FlyClient verifiers through a single, interpretable parameter. 
We also showed how such a model can be used to derive verifier configurations via an optimization procedure, which saves some proof size as side effect.
Second, we designed and implemented a FlyClient prover for the Zcash blockchain by extending the Zebrad full node. 
Our implementation demonstrates that FlyClient functionalities can be integrated into an existing production-grade system with negligible impact on synchronization time, while requiring only moderate additional storage. 
This provides, to the best of our knowledge, the first end-to-end evaluation of FlyClient in a real PoW blockchain.
We also analyzed the size of interactive and non-interactive FlyClient proofs under different representations and verifier variants, highlighting the dominant impact of block header components (notably Equihash solutions) on proof size. 
Third, we proposed two optimizations: cumulative proofs, which reduce redundancy in ancestry proofs without protocol changes, and distilled proofs, which significantly decrease proof size at the cost of some modifications to the consensus layer. 
Our experimental evaluation shows that distilled proofs can reduce non-interactive proof sizes to a few hundred KiB, making on-chain verification even more economically viable.
Overall, our results indicate that FlyClient is a promising building block for applications such as trustless cross-chain bridges and lightweight verification in constrained environments. 
In future work, we plan to study the application of SNARKs to non-interactive FlyClient proofs, to make them even more compact and efficient from the verifier point of view.